\newtheorem{myproposition}{\it Proposition}
\newtheorem{mydefinition}{\it Definition}
\newtheorem{myexample}{\it Example}
\begin{document}
\title{Strange Attractor in Density Evolution}

\author{\IEEEauthorblockN{Sinan Kahraman}
\IEEEauthorblockA{Department of Electrical-Electronics Engineering\\
Bilkent University,
Ankara, TR-06800, Turkey\\
Email: sinank@ieee.org}
}

\author{Sinan~Kahraman,~\IEEEmembership{Member,~IEEE}
\thanks{
This work was supported in part by T\"UB\.ITAK,
Turkey, under 1929B011500065.
S.~Kahraman was with Bilkent University, Ankara, TR-06800, Turkey
(e-mail: sinank@ieee.org).
}}

\onecolumn

\markboth{DRAFT}%
{DRAFT}

\maketitle

\begin{abstract}
The strange attractor represents a complex pattern of behavior in dynamic systems.  
This paper introduces a strange attractor for synthetic channels in polar coding as a result of a geometric property of density evolution that is a polar code construction technique.

First, we define a subset of synthetic channels that are universally less reliable than the original channel.
Here, the cardinality of the attractor set is $(n+2)$-th Fibonacci number for the block length $N=2^n$. 
This can be seen as a significantly large number for very long codes.
On the other hand, strange attractor can provide new achievable rates for the finite block lengths.  

Secondly, it is known that polar codes can be constructed with sub-linear complexity by the use of partial orderings.  
In this study, we additionally define $1+\log_2 (\log_2 N)$ universal operators to reduce the complexity. 
Then, these universal operators can be applied on the attractor set to increase the number of synthetic channels that are universally less reliable than the natural channel. 
\end{abstract}

\begin{IEEEkeywords}
\em Partial ordering, polar codes, strange attractor.
\end{IEEEkeywords}

\IEEEpeerreviewmaketitle

\section{Introduction} 

Polar coding is the first and only coding technique to provably achieve the channel capacity for binary discrete memoryless channels using quasi-linear complexity encoding, decoding and code construction methods defined in detail \cite{arikan_channel_2009}. This technique received great interest due to this important advantage. It has been discussed in 3GPP standardization works and accepted to be used in 5G technology. From an industry point of view, this result demonstrates that polar coding can be considered for use in different technologies where long code lengths are preferred to achieve higher reliabilities for a fixed code rate (i.e., size of information set divided by code length.) This paper aims to design very long polar codes. 

In polar coding, conventional code construction is defined as determination of the order of reliabilities of all synthetic channels. Using Monte-Carlo simulation is a way for the code construction described in \cite{arikan_channel_2009}. Later, since polar coding is a channel specific technique, the polar code construction has been studied as a research direction in the literature. This is mainly due to the fact that once the polar code is designed for communication systems, it is necessary to make this design specific to the channel. For this reason, various code construction methods based on calculating the reliability of the synthetic channels are discussed by the density evolution \cite{mori_density_eval}, upgrading and downgrading
\cite{tal_how_2013}, and Gaussian approximation \cite{trifonov_gaussian}. A comperative study in \cite{harish} investigates the performance of these polar code constructions.


Recently in \cite{schurch_partial} and \cite{barded_partial}, a partial order for synthetic channels is defined as an universal (channel independent) property of the channel polarization. This feature has been considered in \cite{mondelli_sublinear} to reduce the complexity of polar code design based on the considered calculations. As a result, it has been shown in \cite{mondelli_sublinear} that code design for polar codes can be done with very low complexity such as a sub-linear complexity.

The structure of the polar codes in \cite{arikan_channel_2009} with the block length $N$ is introduced by $G=F^{\otimes n}$ matrix. It is defined by the $n^{th}$ Kronecker power of $2\times 2$ kernel matrix $F$. The encoding task is expressed as $x=uG$, in modulo-2 arithmetic.
Polar coding in \cite{arikan_channel_2009} has low complexity encoder and decoder. Here, FFT-like structures require $\mathcal{O}(N\log N)$ complexity. The input vector $u$ with $N$ length contains $K$ information components and $N-K$ frozen components for the coding rate $R=\frac{K}{N}$. It is assumed that the frozen locations are known by the receiver. Here, $N-K$ synthetic channels that are the lowest reliable are reserved for frozen components. High reliability is considered as large mutual information, small Bhattacharya parameter and small error probability. 
Code design methods provides the locations of frozen components for polar coding. 
The transition probabilities of the synthetic channels obtained after one-step of polarization are defined as follows:
\begin{eqnarray}
W^- &=&W(y_1,y_2|u_1) \nonumber  \\
       &=&\frac{1}{2}\sum_{u_2=0}^{1}W(y_1|u_1\oplus u_2)W(y_2|u_2),   \\
W^+&=&W(y_1,y_2,u_1|u_2) \nonumber \\
       &=&\frac{1}{2}W(y_1|u_1\oplus u_2)W(y_2|u_2),
\end{eqnarray}
where $y_1$ and $y_2$ are noisy observation of the receiver unit, $u_1$ and $u_2$ are the input of one-step polarization. Here, $W^-$ denotes the polarized bad channel. $W^+$ is the polarized good channel. $W^+$ has higher reliability than $W^-$ and this is represented as $W^-\prec W \prec W^+$, where $W$ is the natural channel. Reliability ordering for $N$ synthetic channels depends on the natural channel. For that reason, code designs are channel specific that are based on Monte-Carlo simulation or density evolution calculation by the use of Gaussian approximation. 
The solution to the problem is sufficient to be done only once. Unfortunately, it is channel specific that is the major issue of the code design. Recent researches on the partial order have focused on this problem. They exploited relative non-channel specific solutions of the synthetic channels. 

\section{Channel Ordering}\label{sec3}

As a channel independent method, reliabilities of the some synthetic channels can be universally comparable by using the {\em channel ordering} that is intensively studied in \cite{schurch_partial}, \cite{barded_partial} and \cite{mondelli_sublinear_arxiv}. 

The following notation was used to define this property. Any synthetic channel such as $(\cdots((W^+)^-)^+\cdots)^-=W^{+-+\dots -}$ obtained by $n$-step polarization is mapped to index in $[0,N)$ using $1$ for $+$ and $0$ for $-$ polarization step. 
E.g., $W^{--++}: W_3$ with (0011) binary index and $W^{+--+}: W_9$ with (1001) binary index. 
\newline
Let $k_i$ be the $i^{th}$ most significant bit of the binary index of $k$. 

\begin{mydefinition}[The first order operator]\label{firstorder}
{\it {\bf Addition}}.

If $k_i=1$ and $k_j=\ell_j$ for all $j$ where $j\neq i$, then $W_\ell\preceq W_k$.\label{def1}
\end{mydefinition}

\begin{mydefinition}[The second order operator]\label{secondorder}
{\it {\bf Left swap}}.

If $k_i,k_{i+t}=10$ and $\ell_i,\ell_{i+t}=01$ and also $k_j=\ell_j$ for all $j$ and $t\geq1$ where $j\neq i$ and $j\neq i+t$, then $W_\ell\prec W_k$.\label{def2}
\end{mydefinition}

Simply, we have the results $W_{(ab0c)}\prec W_{(ab1c)}$ and $W_{(a01b)}\prec W_{(a10b)}$ for more clarity.
It was introduced that this partial order technique is a sub-linear complexity code design method in \cite{mondelli_sublinear} and \cite{mondelli_sublinear_arxiv}.

In this study, we first introduce the new partial orders that are also feasible tools for the efficient polar code design to reduce the complexity of design method in \cite{mondelli_sublinear_arxiv}. For this purpose, we first introduce the following result. Then, we show that the proposed new feature can sort synthetic channels with a smaller difference in reliability that can not be separated by the known partial order with the {\it Definition~\ref{firstorder}} and {\it \ref{secondorder}}. It can be noticed that these orderings are still universal. To introduce the multiple partial order, we define new operators as follows:

\begin{mydefinition}[The new partial order]{\bf Multiple.}

For a given universal partial order $W_{\cdots0\cdots}\prec W_{\cdots1\cdots}$ (the first order operator), it is easy to notice that $W_{\cdots01\cdots}\prec W_{\cdots10\cdots}$ (the second order operator) is also universal partial order. Recursively, $W_{\cdots0110\cdots}\prec W_{\cdots1001\cdots}$ and $W_{\cdots01101001\cdots}\prec W_{\cdots10010110\cdots}$ are universal partial orders.
\end{mydefinition}

This property is a natural extension of the left swap operator in {\it Definition~\ref{secondorder}}. The number of operators that can be given for the block length $N$ is $1 + \log_2 (\log_2 N)$. Here, we provide $5$ operators that are given in Table~\ref{tab_n16} for $N=65536$. This result shows that the multiple partial order relations can provide more than two operators for $N\geq16$.  
Following examples can be given as a result of the new feature. 
\begin{myexample}\label{example3rd}
By using $3^{rd}$ order operation, 
\begin{equation}
W_{(0110)}\prec W_{(1001)}.
\end{equation}
\end{myexample}
\begin{myexample}
By using $4^{rd}$ order operation, 
\begin{equation}
W_{(01101001)}\prec W_{(10010110)}.
\end{equation}
\end{myexample}
They are new partial order definitions that can not be obtained by the  {\it Definitions~\ref{firstorder}} and {\it \ref{secondorder}}. Notice that the resolution of the new operators are higher than the previous definitions.
This is an important property that can be exploited to order antichains.

\begin{table}
\caption{Multiple operators for $N=65536$}
\begin{center}
\begin{tabular}{|c|c|}
\hline
order & operator (less reliable $\prec$ more reliable) \cr \hline
$1^{st}$ & $0\prec 1$  \cr \hline
$2^{nd}$ & $01\prec 10$  \cr \hline
$3^{rd}$ & $0110\prec 1001$  \cr \hline
$4^{th}$ & $01101001\prec 10010110$  \cr \hline
$5^{th}$ & $0110100110010110\prec 1001011001101001$  \cr \hline
\end{tabular}
\end{center}
\label{tab_n16}
\end{table}%
\pagebreak

We provide the proof of multiple partial orders. For this purpose, proof for 1001 and 0110 partial order can be provided as follows for any given channel reliabilities $x=L$,
\begin{eqnarray}
g(x)&=&W^-\rightarrow eqn.(\ref{eqnn})\\
f(x)&=&W^+\rightarrow eqn.(\ref{eqnp})
\end{eqnarray}

Here, $f(x)$ and $g(x)$ are monotonic increasing functions  for all $x>0$. It is clear that $f(x)>g(x)$  for all $x>0$. 

Hence, we have
\begin{itemize}
\item[\it i.] $g(f(x))$ and $f(g(x))$ are increasing functions and $f(x)>g(x)$ for all $x>0$. Then, following result is obtained. 
\begin{eqnarray}g(f(x))<f(g(x))\end{eqnarray}
\item[\it ii.] $f(g(g(f(x))))$ and $g(f(f(g(x))))$ are increasing functions and $g(f(x))<f(g(x))$ for all $x>0$. 
\newline Finally, the following result is obtained. 
\begin{eqnarray}g(f(f(g(x))))<f(g(g(f(x))))\end{eqnarray}
\end{itemize}
Hence, the partial order $0110\prec 1001$ is obtained.

This can be successively applied for higher order partial orders. 
Experimental results are placed in Section~\ref{SecResult}.

\section{Improved Gaussian Approximation}\label{sec4}

The Gaussian approximation for density evolution was first proposed by Chung \textsl{et al.} to analyze low density parity check codes in \cite{Chung}. 
Then, the Gaussian approximation was used by Trifonov in \cite{trifonov_gaussian} as one of the deterministic ways to compute the reliability of synthetic channels. 
We summarized this method as the following way.

We assume that the all-zero codeword is transmitted to the receiver. The log-likelihood ratio (LLR) for a noisy observation $y_i=x_i+n_i$ is defined as $L^i_1(y_i)=\log \frac{W(y_i|0)}{W(y_i|1)}$.    
The probability density function is $f(x)=e^{-x^2/2\sigma^2}$ for additive white Gaussian noise with $N(0,\sigma^2)$ distribution. The expected value of $L^i_1(y_i)$ can be considered as follows:
$$E\left[L^i_1(y_i)\right]={{2}/\sigma^2}.$$

Variance of the LLR is given as: 
$$V\left[L^i_1(y_i)\right]=\frac{4}{\sigma^2}.$$

The update rules for the expectations of inter-level LLRs is given for $i=1,\dots,n/2$ as follows: 
\begin{eqnarray}
E\left[L^{(2i-1)}_{j}\right]&=&\phi^{-1}\left(1-\left(1-\phi\left(E\left[L^i_{j/2}\right]\right)\right)^2\right),\label{eqnn} \\
E\left[L^{(2i)}_{j}\right]&=&2E\left[L^i_{j/2}\right]\label{eqnp}
\end{eqnarray}
where

\begin{equation}
\phi(x)=\Bigg \{
  \begin{tabular}{lr}
  $1-\frac{1}{\sqrt{4 \pi x}} \int_{-\infty}^{\infty} \tanh \frac{u}{2} e^{-\frac{(u-x)^2}{4x}} du$ & $x>0$ \\
  $1$, & $x=0$  
  \end{tabular}.
  \end{equation}
  
The error probability of indices $i\in\{1,\dots,N\}$ is given as follows: 
\begin{equation}\pi_i \approx Q\left(\sqrt{E\left[L^i_{N}\right]/2}\right)=\frac{1}{2}\textrm{erfc}\left(\frac{1}{2}\sqrt{E\left[L^i_{N}\right]}\right)\end{equation}
where \begin{equation}\textrm{erfc}\left(x\right)=\frac{2}{\sqrt{\pi}} \int_{x}^{\infty} e^{-v^2} dv.\end{equation}
An upper bound of the error probability is the sum of error probabilities for the set of information indices.

To simplify the update rule we use an approximation of
\begin{equation}\tanh x \approx \left\{
  \begin{tabular}{cr}
  $1$, & $x>0$ \\
  $0$, & $x=0$ \\ 
  $-1$, & $x<0$
  \end{tabular}\right .\end{equation}
as given in the subsection, and hence, the simplified update rule is provided by using the following definitions:
\begin{eqnarray}
\phi(x) =\textrm{erfc}\left(\frac{\sqrt{x}}{2}\right)
\end{eqnarray}
\begin{eqnarray}
{\phi}^{-1}(x) = 4 \left(\textrm{erfcinv}\left(x\right)\right)^2
\end{eqnarray}
The simplified update rule is given as follows:
\begin{eqnarray}
&&E\left[L^{(2i-1)}_{j}\right]=\nonumber\\
&&4\left(\textrm{erfcinv}\left(1-\left(1-\textrm{erfc}\left(\frac{1}{2}\sqrt{E\left[L^i_{j/2}\right]}\right)\right)^2\right)\right)^2\,\,\,\,\,\,
\end{eqnarray}
\begin{eqnarray}
&&E\left[L^{(2i)}_{j}\right]=2E\left[L^i_{j/2}\right].
\end{eqnarray}

This is a numerically stable update that can be efficiently implemented by only using a lookup table for the function $\textrm{erfc(x)}$ and $\textrm{erfcinv(x)}$.   

\subsection{Simplification of the functions: $\phi(x)$ and $\phi^{-1}(x)$}
First, we consider the following assumption:
$$\tanh x \approx \left\{
  \begin{tabular}{cr}
  $1$, & $x>0$ \\
  $0$, & $x=0$ \\ 
  $-1$, & $x<0$
  \end{tabular}\right .$$
Then, we use the following equations.
\begin{eqnarray}
&&\frac{1}{\sqrt{4 \pi x}} \int_{-\infty}^{\infty} \tanh \frac{u}{2} e^{-\frac{(u-x)^2}{4x}} du \approx \nonumber\\
&&\frac{1}{\sqrt{4 \pi x}} \left( \int_{0}^{\infty} e^{-\frac{(u-x)^2}{4x}} du - \int_{-\infty}^{0} e^{-\frac{(u-x)^2}{4x}} du\right) \nonumber
\end{eqnarray}
We apply the transformation: $\frac{u-x}{2\sqrt{x}}=v$. Then, 
\begin{eqnarray}
&&\frac{1}{\sqrt{4 \pi x}} \left( \int_{0}^{\infty} e^{-\frac{(u-x)^2}{4x}} du - \int_{-\infty}^{0} e^{-\frac{(u-x)^2}{4x}} du\right)=\nonumber\\
&&\frac{1}{\sqrt{\pi}} \left( \int_{-\sqrt{x}/2}^{\infty} e^{-v^2} dv - \int_{-\infty}^{-\sqrt{x}/2} e^{-v^2} dv\right).\nonumber
\end{eqnarray}
Then, we use the definition: $$\textrm{erfc}\left(\frac{\sqrt{x}}{2}\right)=\frac{2}{\sqrt{\pi}} \int_{\sqrt{x}/2}^{\infty} e^{-v^2} dv.$$  
\begin{eqnarray}
&&1-\textrm{erfc}\left(\frac{\sqrt{x}}{2}\right)=\nonumber\\
&&\frac{1}{\sqrt{4 \pi x}} \left( \int_{0}^{\infty} e^{-\frac{(u-x)^2}{4x}} du - \int_{-\infty}^{0} e^{-\frac{(u-x)^2}{4x}} du\right).\nonumber
\end{eqnarray}
Finally, we have the simplified equations as follows:
$$\phi\left(x\right) = \textrm{erfc}\left(\frac{\sqrt{x}}{2}\right),$$

$${\phi}^{-1}\left(x\right) = 4 \left(\textrm{erfcinv}\left(x\right)\right)^2.$$

Here, we can compute the reliability of synthetic channels by the simplified Gaussian approximation update functions. Now, we investigate the behaviors of these update functions by using the geometric properties.
First, $y=2x$ and $y=\phi^{-1}\left(1-\left(1-\phi\left(x\right)\right)^2\right)$ functions are depicted in Fig.~\ref{fig1}. Moreover, the reflections of these curves with respect to the $y=x$ line are also added.

Experimental result for AWGN channel is given in Section~\ref{SecResult} for channel ordering 1001 and 0110.

\begin{figure}[b!]
\centering
\includegraphics[width=0.5\textwidth]{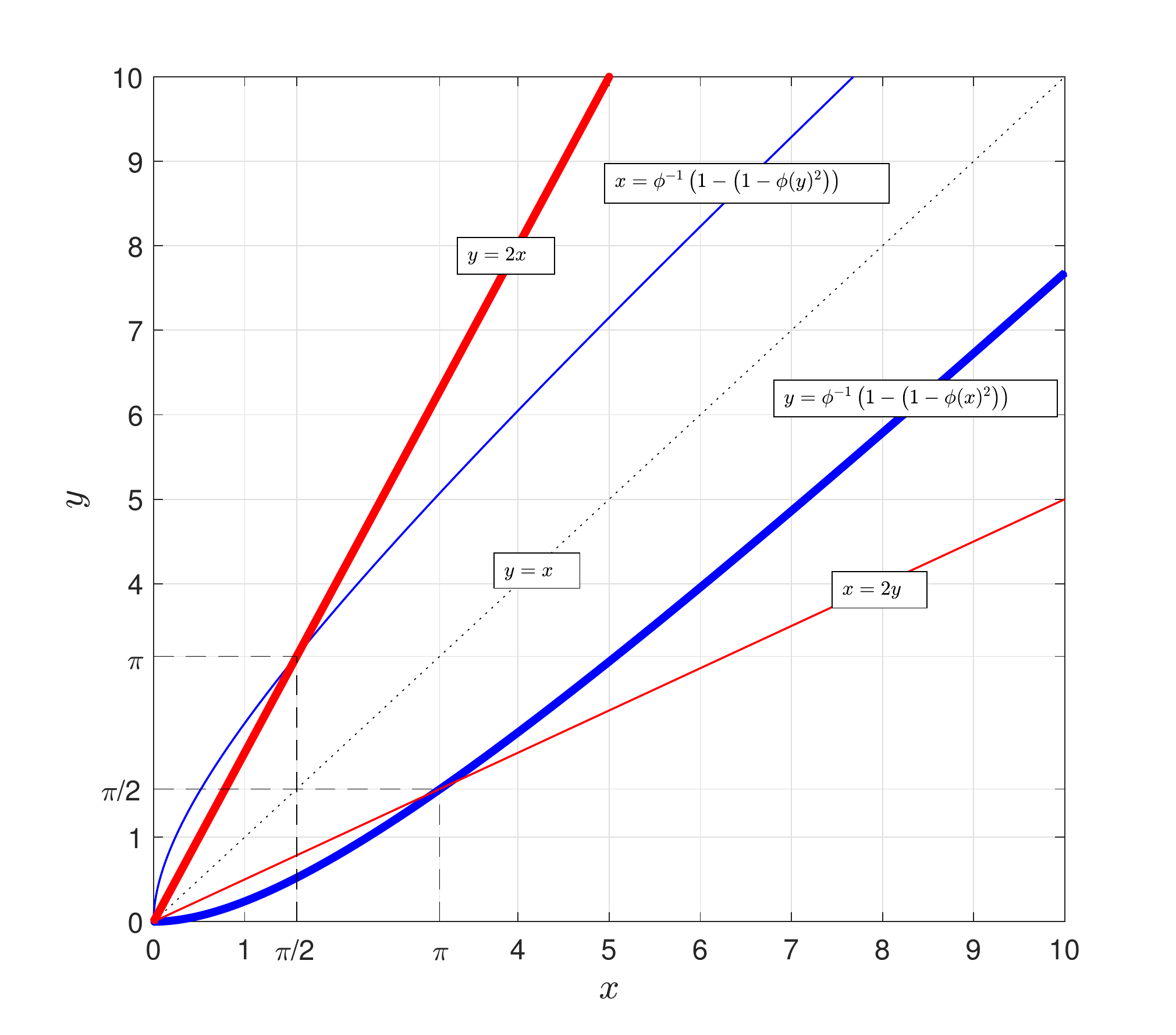}
\caption{Plot of the recursive functions for update rule of Gaussian approximation method. (bold curves: the functions and thin curves: the reflections.)} 
\label{fig1}
\end{figure} 
\newpage
\section{Strange Attractor}\label{sec5}
In this section, we focus on geometric properties of the update rules that are considered in the previous section for Gaussian approximation for polar code constructions. 

Let us define the functions $f_1(x)=x/2$ and  $f_2(x)=\phi^{-1}\left(1-\left(1-\phi\left(x\right)\right)^2\right)$. 
In Fig.~\ref{fig1}, some observations can be noted as the following properties:

\begin{enumerate}[i)]
\item $y=f_1(x)$ and $y=f_2(x)$ intersect at $(x=0,y=0)$.
\item $y=f_1(x)$ and $y=f_2(x)$ intersect at $(x=\pi,y=\pi/2)$.
\item $f_1(x)>f_2(x)$ for $x \in (0,\pi)$.
\end{enumerate}

These observations help us to identify an universal subset of synthetic channels that are less reliable than the natural channel. As a similar approach to analyze in a chaotic systems, we introduce strange attractor to represent the complex pattern of behavior in channel polarization. We provide the following definition for channel polarization scenario.

\begin{mydefinition}
Strange attractor is a geometric property to define a subset of synthetic channels that are universally unreliable than the natural channel $W_k\prec W$. 
\end{mydefinition}

As a brief description, any natural channel for a given LLR can be polarized to a synthetic channels with the index that has no '11' in binary expansion provides a subset of synthetic channels as a strange attractor.

Moreover, it can be noticed that they are universally less reliable than the natural channel with $LLR<\pi/2$. If the block length is long enough, the reliability of these channels converge to 0. We provide the following two examples to make a connection between strange attractor and polar code construction. 

\begin{myexample}
For the original channel with $LLR<\frac{\pi}{2}$, all possible synthetic channels that are labelled by the indices with binary expansion $(k_1,k_2,\dots,k_n)$ with $k_i\neq1$ and $k_{i+1}\neq1$ for any $i=[1,n)$ are less reliable than the original channel.
\end{myexample}
\begin{myexample}\label{example_pi}
For the original channel with $LLR<\pi$, all possible synthetic channels that are labelled by the indices with binary expansion $(k_1,k_2,\dots,k_n)$ with $k_i\neq1$ and $k_{i+1}\neq1$ for any $i=[1,n)$ and $k_1\neq1$ are less reliable than the original channel.
\end{myexample}

\begin{myproposition}\label{th_attractor}
As the block length increases, the LLR values of the synthetic channels that are labelled by the indices with binary expansion $(k_1,k_2,\dots,k_n)$ with $k_i\neq1$ and $k_{i+1}\neq1$ for any $i=[1,n)$ converges to $0$ (unreliable) and the number of these type of synthetic channels is $F_{2+n}$, where $F_i$ is $i^{th}$  Fibonacci number in $\{1,1,2,3,5,8,13,21,34,55,89,144,\dots\}$ for the block length $N=2^n$.
\end{myproposition}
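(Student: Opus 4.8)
The plan is to split the statement into two logically independent claims and prove them separately: a \emph{combinatorial} claim that the number of admissible indices equals $F_{n+2}$, and an \emph{analytic} claim that their LLRs tend to $0$. For the count, I would let $a_n$ be the number of binary strings $(k_1,\dots,k_n)$ with no two adjacent $1$'s and condition on the last symbol. Every admissible string with $k_n=0$ comes from an arbitrary admissible string of length $n-1$, while every admissible string with $k_n=1$ forces $k_{n-1}=0$ and hence comes from an arbitrary admissible string of length $n-2$ by appending $01$. This yields
\[
a_n = a_{n-1} + a_{n-2},
\]
and with base values $a_1=2$ and $a_2=3$ a routine induction identifies $a_n=F_{n+2}$, since $F_3=2$ and $F_4=3$ in the indexing of the statement.

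For the convergence I would work only with the two maps already introduced: the good (doubling) step $g(x)=2x$ associated to a $1$, and the bad step $f_2(x)$ associated to a $0$. The engine is Observation~(iii), namely $f_2(x)<f_1(x)=x/2$ on $(0,\pi)$, together with the fixed points from Observations~(i) and~(ii). The structural payoff of forbidding two adjacent $1$'s is that the operation string decomposes into finitely many block types — a lone $0$ and a paired block containing one $1$ — each of which acts as a strictly contracting, monotone increasing self-map of the invariant interval $(0,\pi)$ whose \emph{only} fixed point there is $0$: a lone bad step already more than halves the LLR because $f_2(x)<x/2$, while a paired block contracts since $2f_2(x)<x$ on $(0,\pi)$ (or, under the reversed order, $f_2(2x)<x$ on $(0,\pi/2)$), the endpoint value $f_2(\pi)=\pi/2$ of~(ii) being exactly the unstable fixed point $x=\pi$ one must remain below. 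Starting from the natural LLR $L$ placed inside the basin — guaranteed by the thresholds $L<\pi/2$, or $L<\pi$ together with $k_1\neq1$, as in the two preceding examples — the LLR sequence taken after each block is strictly decreasing and bounded below by $0$, hence convergent; since an infinite admissible string has infinitely many blocks but only finitely many block \emph{types}, some type recurs infinitely often, and passing to the limit along that subsequence by continuity forces the limit to be a fixed point of that block map in $[0,\pi)$, which can only be $0$.

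The main obstacle I anticipate is not the contraction inequality itself but the \emph{trapping}: one must verify that the trajectory never escapes the basin $(0,\pi)$ of the attracting fixed point, because above $\pi$ the inequality in~(iii) reverses and the doubling step drives the value to infinity. This is exactly where the no-two-consecutive-$1$'s condition and the initial thresholds interact — a single isolated $1$ can at most double, and the mandatory neighboring $0$ immediately restores the value well inside $(0,\pi)$, whereas two consecutive $1$'s could quadruple and overshoot $\pi$. Making this invariance rigorous, by using the monotonicity of $f_2$ to propagate the bound through the whole composition and by treating the boundary regime $L\in[\pi/2,\pi)$ via the extra hypothesis $k_1\neq1$, is the delicate part; once it is in place, the monotone-sequence-plus-unique-fixed-point argument closes the proof.
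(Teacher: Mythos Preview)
Your combinatorial argument is essentially identical to the paper's: you condition on the \emph{last} bit and append $0$ or $01$, while the paper conditions on the \emph{first} bit and prepends $0$ or $10$, both yielding $|\mathcal{A}^\ell|=|\mathcal{A}^{\ell-1}|+|\mathcal{A}^{\ell-2}|$ with the same base cases and hence $|\mathcal{A}^n|=F_{n+2}$.

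For the convergence claim your proposal goes well beyond the paper. The paper's proof of that half consists of a single sentence, ``We observe that LLR values converge to $0$ for all possible bit strings with $k_i\neq1$ and $k_{i+1}\neq1$,'' and defers entirely to the graphical observations (i)--(iii) in Fig.~1; no dynamical argument is written out. Your block decomposition into the self-maps $f_2$ and $f_2(2\,\cdot)$ (or $2f_2(\cdot)$), the verification that each is a strict contraction on the invariant interval via $f_2(x)<x/2$ on $(0,\pi)$, and the monotone-sequence--plus--unique-fixed-point closing step together supply a rigorous justification the paper does not give. What your approach buys is an actual proof of the attractor half; what the paper buys is brevity at the cost of leaving the analytic claim as an empirical observation. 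Your identification of the trapping issue (the basin $(0,\pi)$ and the role of the thresholds $L<\pi/2$ versus $L<\pi$ with $k_1\neq1$) is exactly the point the paper hides inside the word ``observe,'' and it is the right place to be careful. One small wrinkle to tidy: an admissible string may terminate in a lone $1$, which does not fit either block type; since this affects only the final step by at most a doubling, it is harmless for the $n\to\infty$ limit, but you should say so explicitly.
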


\begin{proof}
The proof for the proposition is presented in two parts. The first part is concerned with the exact number of $n$-length bit strings with $k_i\neq1$ and $k_{i+1}\neq1$ for any $i=[1,n)$. Let $\mathcal{A}^i$ be a set of $i$-length bit strings with $k_i\neq1$ and $k_{i+1}\neq1$ for any $i=[1,n)$. The cardinality of the set $|\mathcal{A}^i|$ can be given as follows:

\begin{enumerate}[i)]
\item $|\mathcal{A}^1|=2$ where $\mathcal{A}^1:\{0,1\}$
\item $|\mathcal{A}^2|=3$ where $\mathcal{A}^2:\{00,01,10\}$
\item $|\mathcal{A}^3|=5$ where $\mathcal{A}^3:\{000,001,010,100,101\}$
\item $|\mathcal{A}^4|=8$ where\\ $\mathcal{A}^4:\{{\bf 0}000,{\bf 0}001,{\bf 0}010,{\bf 0}100,{\bf 0}101,{\bf 10}00,{\bf 10}01,{\bf 10}10\}$
\item $|\mathcal{A}^\ell|=|[{\bf 0}|\mathcal{A}^{\ell-1}],[{\bf 10}|\mathcal{A}^{\ell-2}]|=|\mathcal{A}^{\ell-1}|+|\mathcal{A}^{\ell-2}|$ 
\end{enumerate}
As a result, $|\mathcal{A}^n|=F_{n+2}$ 

where $F_n=\{1,1,2,3,5,8,13,21,34,55,89,144,\dots\}$.

The second part of this proof is about the attractor. As a result of the observations, we could pre-define $F_{n+2}$ synthetic channels named as attractor thanks to the geometrical properties in Fig.~\ref{fig1} we obtained. We observe that LLR values converge to 0 for all possible bit strings with $k_i\neq1$ and $k_{i+1}\neq1$ for any $i=[1,n)$. 
\end{proof}

%

We notice that similar observation is mentioned as tangent bifurcation or saddle node that was reported in \cite{reportedd} to compute decoding thresholds and analyze LDPC codes. We use the similar observation as in \cite{vv} to define a complex pattern of behavior by using the Strange Attractor definition in this study.  

\section{Improved Polar Code Design}

We first consider strange attractor for the finite block length in this section. The number of the attracted synthetic channels are provided for a given polarization steps $n$ in Table~\ref{tab1}. 

\begin{table}[hp]
\caption{Number of this type of synthetic channels}
\begin{center}
\begin{tabular}{|c|c|c|c|c|}
\hline
$n$ & $F_{n-2}$ & Rate1 & PO & Rate2  \cr \hline
6 & 21 & 0.6719         & 17 & 0.4063 \cr \hline
7 & 34 & 0.7344         & 26 & 0.5313 \cr \hline
8 & 55 & 0.7852         & 31 & 0.6641 \cr \hline
9 & 89 & 0.8262         & 38 & 0.7520 \cr \hline
10 & 144 & 0.8594     & 31 & 0.8291 \cr \hline
11 & 233 & 0.8862     & 40 & 0.8667 \cr \hline
12 & 377 & 0.9080     & 31 & 0.9004 \cr \hline
13 & 610 & 0.9255     &      &             \cr \hline
14 & 987 & 0.9398     &      &             \cr \hline
15 & 1597 & 0.9513   &      &             \cr \hline
16 & 2584 & 0.9606   &      &             \cr \hline
\end{tabular}
\end{center}
\label{tab1}
\end{table}%

Rate1 in Table.\ref{tab1} show the achievable rates by only strange attractor. 
Alternatively, the achievable rates are shown in the following figure for $LLR<\pi/2$.
\begin{figure}[hp]
\centering
\includegraphics[width=0.35\textwidth]{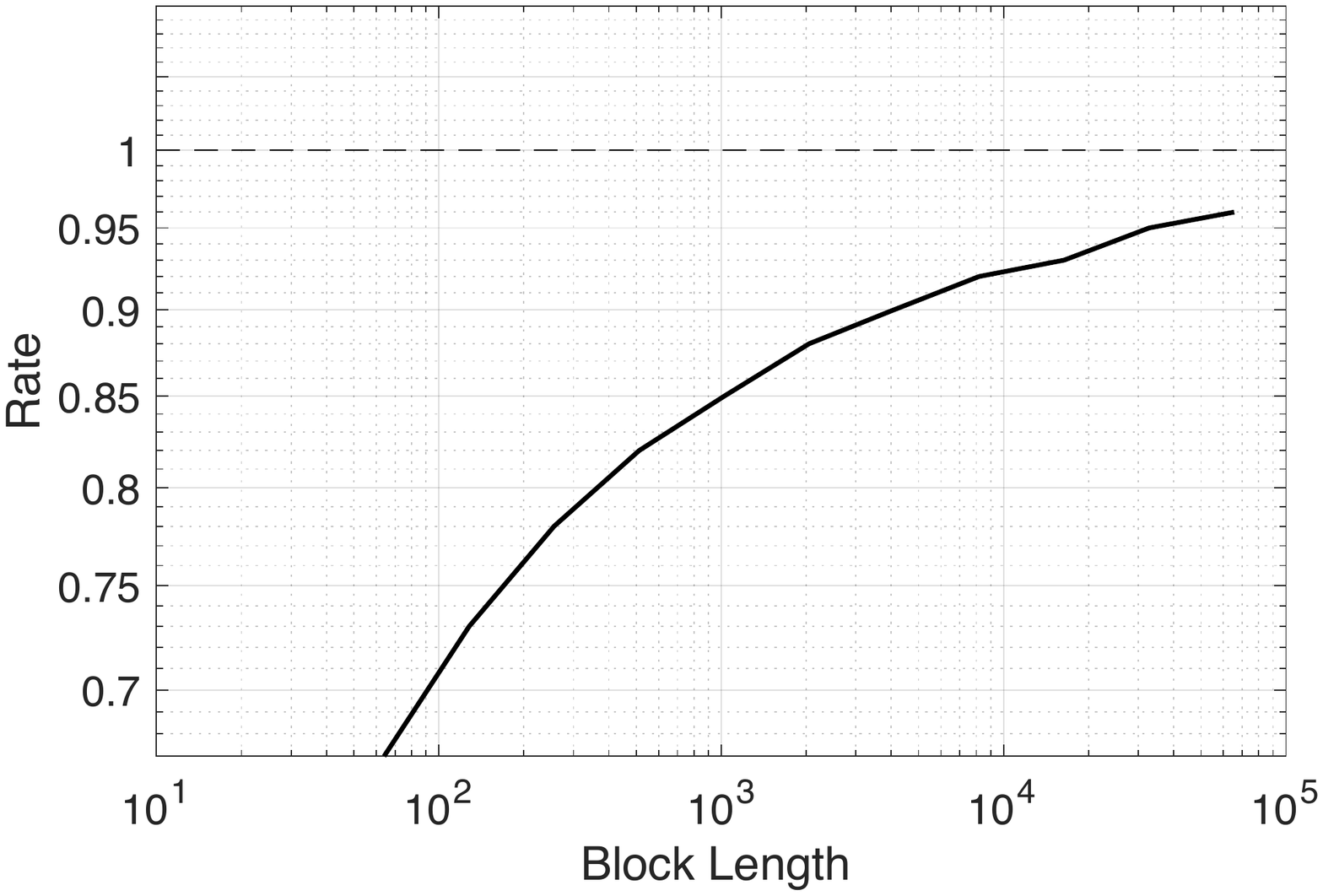}
\vspace{-0.1cm}
\caption{Achievable rates for the finite block length by strange attractor.} 
\label{fige1}
\end{figure} 

Rate2  in Table.\ref{tab1} is also achievable rates by strange attractor and some partial orders.
It can be noticed that the proposed partial ordering technique can increase the number of strange attractor set as seen in Example 5. and better bound for achievable rates can be found.

Now let's examine the asymptotic behaviour of the attractor set, which we are pre-defined. For this purpose, we provide the following expression. 

\begin{equation}
\lim_{N\rightarrow \infty}{\frac{\textsl{Number of channels with (11)}}{\textsl{Number of all channels}}}=1.
\end{equation}

Proof is given here. We consider the expression as follows.
The exact number of channels with $k_i\neq1$ and $k_{i+1}\neq1$ for any $i=[1,n)$ can be described as follows:
\begin{equation}
\Delta=\Delta_1+\Delta_2
\end{equation}
where $\Delta_1$ is shown in Fig.~\ref{fig_delta1} and $\Delta_2$ is shown in Fig.~\ref{fig_delta2}. 

\begin{figure}[t!]
\centering
\vspace{0.5cm}
\begin{tikzpicture}[thick,scale=0.5, every node/.style={scale=0.9}]
\draw (-0.2,-0.5)  -- (13.7,-0.5);
\draw (-0.2,0.5)  -- (13.7,0.5);
\draw (-0.2,-0.5)  -- (-0.2,0.5);
\draw (3.2,-0.5)  -- (3.2,0.5);
\draw (4.6,-0.5)  -- (4.6,0.5);
\draw (13.7,-0.5)  -- (13.7,0.5);
\draw (3.2,0-0.03) node[right] {$011$};
\draw (3.4,-1) node[right] {$3$};
\draw (6.6,0-0.03) node[right] {$all-possible$};
\draw (6.9,-1) node[right] {$n-t-3$};
\draw (1,0-0.03) node[right] {$\mathcal{A}^t$};
\draw (1,-1) node[right] {$t$};
\end{tikzpicture}
\caption{A graphical representation of the case $\Delta_1$.}
\label{fig_delta1}
\end{figure}

\begin{figure}[t!]
\centering
\vspace{0.5cm}
\begin{tikzpicture}[thick,scale=0.5, every node/.style={scale=0.9}]
\draw (-0.2,-0.5)  -- (13.7,-0.5);
\draw (-0.2,0.5)  -- (13.7,0.5);
\draw (-0.2,-0.5)  -- (-0.2,0.5);
\draw (0.9,-0.5)  -- (0.9,0.5);
\draw (13.7,-0.5)  -- (13.7,0.5);
\draw (-0.2,0-0.03) node[right] {$11$};
\draw (-0.2,-1) node[right] {$2$};
\draw (1.6,0-0.03) node[right] {$all-possible$};
\draw (1.9,-1) node[right] {$n-2$};
\end{tikzpicture}
\caption{A graphical representation of the case $\Delta_2$.}
\label{fig_delta2}
\end{figure}
Here,

\begin{equation}
\Delta_1=\sum_{t=0}^{n-3} |\mathcal{A}^t| \cdot 2^{n-t-3}
\end{equation}
and 
\begin{equation}
\Delta_2=2^{n-2}.
\end{equation}

We can show that
\begin{equation}
\Delta=2^{n-2}+\sum_{t=0}^{n-3} |\mathcal{A}^t| \cdot 2^{n-t-3}
\end{equation}

\begin{eqnarray}
\Delta&=&\left[\sum_{t=0}^{n-3} F_{t+2} \cdot 2^{n-t-3}\right]+2^{n-2}\nonumber \\
&=&2^{n-1}\left(\left[\sum_{t=0}^{n-3} F_{t+2}/2^{t+2}\right]+1/2\right).\nonumber 
\end{eqnarray}
Then, we have
\begin{equation}
\Delta=2^{n-1}\left(\left[\frac{F_2}{2^2}+\frac{F_3}{2^3}+\dots+\frac{F_{n-1}}{2^{n-1}}\right]+\frac{F_1}{2^1}+\frac{F_0}{2^0}\right).
\end{equation}

Here, notice that $\frac{F_0}{2^0}=0$ and $\frac{F_1}{2^1}=1/2$.
The final exact expression is
\begin{equation}
\Delta=2^{n-1}\sum_{t=0}^{n-1}F_t/2^t.
\end{equation}
There is the power series $\sum_{t=0}^{\infty} F_{t} \cdot k^{-t}=\frac{k}{k^2-k-1}$ for integer $k>1$.

As a result, 
\begin{equation}
\lim_{n\rightarrow \infty}{\frac{\textsl{Number of channels with (11)}}{\textsl{Number of all channels}}}=\lim_{n\rightarrow \infty} \frac{2^n\frac{1}{2}\frac{2}{2^2-2-1}}{2^n}=1.
\end{equation}

Now we can consider here how we can benefit from the definition of the attractor in code design for an example. 

For this purpose, we consider the natural channel with LLR parameter is greater than $\pi/2$.
In this case, the synthetic channel indices to be identified by the attractor will start from the most significant bit position, and the different length sequences will be determined which will reduce the LLR value of the natural channel to less than $\pi/2$ as an inter-level LLR value, (please see the {\it Example~\ref{example_pi}}).
For more clarity, we provide the following plain text for efficient design.
\vspace{0.5cm}
\begin{mydefinition}
An efficient code design:
\begin{enumerate}[i)]
\item Define an attractor set $\Omega$ as a subset of $\{1,2,\dots,N\}$ (indices without 11 for $n$-bit binary index)
\item {\bf for} i=1,...,n\newline \,\,\,\,\,\,\,\,Apply $i$-th partial order operator to update $\Omega$\newline {\bf end}
\item Apply simplified Gaussian approximation\\ to compute an ordering for some of the complement of the set $\Omega$
\item Then define new synthetic channels that are less reliable than the natural channel.
\end{enumerate}
\end{mydefinition}
\vspace{0.5cm}

\begin{myexample}
We consider $n=6$ in this example. There are $64$ synthetic channels placed in Table~\ref{tab2}. Here, black bold face binary expansions denote a channel identified by the attractor (i.e. they do not have 11). There are $F_{n+2}=21$ this type of channels for $n=6$ that are universally less reliable than the natural channel.

We apply the multiple partial order to find more channels that are worse than the natural channel $W$.
When we consider the first order operator to increase the number of bad channels, there are not any new bad channel by removing 1 in the attractor. The result is guaranteed that it is placed in the attractor. Then, we can apply second order operator to find more synthetic channels.
For example; $W_{(101000)}$ is a member of the attractor (i.e., 01$\rightarrow$10). By using 2nd order operator, we have the following result. 
\begin{equation}
W_{(011000)}\prec W_{(101000)}\prec W.
\end{equation}

Finally, we can apply third order operator.
For example; $W_{(011100)}$ is a member of the bad channels that are union set of attractor and 2nd order operator. By using 3rd order operator (i.e., 0110$\rightarrow$1001), we have the following result. 
\begin{equation}
W_{(011100)}\prec W_{(101010)}\prec W.
\end{equation}
The synthetic channels found by multiple partial order are denoted by blue bold face in Table.~\ref{tab2}.

As a result, we have found $38$ synthetic channels that are worse than the natural channels for $LLR<\pi/2$. On the other hand $27$ of them is still worse than the natural channel for $LLR<\pi$ (i.e., they do not have 1 in the first bit position).  
\end{myexample}

\begin{table}[hp]\large
\caption{Example for Attractor and Multiple Partial Order}
\begin{center}
\begin{tabular}{|c|c|c|c|}
\hline
                  \bf 000000 &                   \bf 001000 &                  \bf 010000 & \color{blue} \bf 011000 \cr \hline
                  \bf 000001 &                   \bf 001001 &                  \bf 010001 & \color{blue} \bf 011001 \cr \hline
                  \bf 000010 &                   \bf 001010 &                  \bf 010010 & \color{blue} \bf 011010 \cr \hline
\color{blue}\bf 000011 & \color{blue} \bf     001011 &\color{blue} \bf 010011 &\color{gray}   \bf   011011 \cr \hline
                   \bf 000100 & \color{blue} \bf 001100 &                 \bf 010100 & \color{red} \bf 011100 \cr \hline
                   \bf 000101 & \color{blue} \bf 001101 &                 \bf 010101 &\color{gray}   \bf   011101 \cr \hline
\color{blue} \bf 000110 & \color{red}   \bf 001110 & \color{red} \bf 010110 &\color{gray}    \bf  011110 \cr \hline
  \color{blue} \bf 000111 & \color{green}  \bf    001111 &\color{gray}   \bf   010111 &\color{gray} \bf     011111 \cr \hline \hline
                   \bf 100000 &                   \bf 101000 &\color{green} \bf      110000 &\color{gray}  \bf    111000 \cr \hline
                   \bf 100001 &                   \bf 101001 &\color{gray}  \bf    110001 &\color{gray}  \bf    111001 \cr \hline
                   \bf 100010 &                   \bf 101010 &\color{gray}   \bf   110010 &\color{gray}  \bf    111010 \cr \hline
\color{blue} \bf 100011 &\color{gray}    \bf  101011 &\color{gray}  \bf    110011 &\color{gray}   \bf   111011 \cr \hline
                   \bf 100100 &\color{gray}    \bf  101100 &\color{gray}  \bf    110100 &\color{gray}   \bf   111100 \cr \hline
                   \bf 100101 &\color{gray}    \bf  101101 &\color{gray}  \bf    110101 &\color{gray}   \bf   111101 \cr \hline
\color{blue} \bf 100110 &\color{gray}    \bf  101110 &\color{gray}  \bf    110110 &\color{gray}   \bf   111110 \cr \hline
\color{gray}    \bf  100111 &\color{gray}    \bf  101111 &\color{gray}   \bf   110111 &\color{gray}   \bf   111111 \cr \hline
\end{tabular}
\end{center}
\label{tab2}
\end{table}%
(Black: Strange Attractor. Blue: 2nd Partial Ordering. Red: 3rd Partial Ordering. Green: Computation. $W \preceq$ Gray chanels)
\newline
Here, we provide the following description of the strange attractor and partial order (PO) for the block length $N=64$.  
\begin{enumerate}[Step \text{A} i.]
\item {\color{blue} \bf 000011} $\prec$ {\color{black} \bf 000101} (Strange Attractor)
\item {\color{blue} \bf 000110} $\prec$ {\color{black} \bf 010010 }(Strange Attractor)
\item {\color{blue} \bf 000111} $\prec$  {\color{black} \bf 010101 }(Strange Attractor)
\item {\color{blue} \bf 100011} $\prec$ {\color{black} \bf 100101 }(Strange Attractor)
\item {\color{blue} \bf 100110} $\prec$  {\color{black} \bf 101010 }(Strange Attractor)
\item {\color{blue} \bf 001011} $\prec$  {\color{black} \bf 101010 }(Strange Attractor)
\item {\color{blue} \bf 001100} $\prec$  {\color{black} \bf 010100 }(Strange Attractor)
\item {\color{blue} \bf 001101} $\prec$  {\color{black} \bf 010101 }(Strange Attractor)
\item {\color{red} \bf 001110} $\prec$  {\color{blue} \bf 100110} (see step 5: 2nd PO) OR\\ {\color{red} \bf 001110} $\prec$  {\color{black} \bf 010101} (Strange Attractor 3th PO)
\item {\color{blue} \bf 010011} $\prec$  {\color{black} \bf 010101} (Strange Attractor)
\item {\color{red} \bf 010110} $\prec$  {\color{blue} \bf 010011} (see step 10: 2nd PO) OR\\ {\color{red} \bf 010110} $\prec$ {\color{black} \bf 100101} (Strange Attractor 3th PO)
\item {\color{blue} \bf 011000} $\prec$  {\color{black} \bf 101000} (Strange Attractor)
\item {\color{blue} \bf 011001} $\prec$  {\color{black} \bf 101001} (Strange Attractor)
\item {\color{blue} \bf 011010} $\prec$  {\color{black} \bf 101010} (Strange Attractor)
\item {\color{red} \bf 011100} $\prec$  {\color{blue} \bf 001110}  (see step 9: 2nd PO) OR\\ {\color{red} \bf 011100} $\prec$ {\color{black} \bf 101010} (Strange Attractor 3th PO)
\end{enumerate} 
It can be seen that the strange attractor and 2nd or 3th partial order steps can complete the all set that is shown in the Table~\ref{tab2}. By the help of 3th partial steps for the red bold face synthetic channels can be defined individually. And hence, all of the i-xv steps A can be compute, independently.  

Simply,\newline 
$\Omega$(Strange Attractor; 1st, 2nd, 3th PO) $\rightarrow$ fully idependently computable i-xv steps A \newline
OR \newline
$\Omega$(Strange Attractor; 1st, 2nd PO) $\rightarrow$ xv steps dependently computable in 3 steps: \{(5,10),(9,11) and all\}.

Then, we can compute for $\pi/2$ and $\pi$:
\begin{enumerate}[Step \text{B} i.]
\item {\color{green} \bf 110000} $\prec$ {\color{green} \bf 001111} $\prec$ $W$ (by compute)

\vspace{1cm}
for $001111$:
$$\pi/2 \rightarrow 0.6 \rightarrow 0.05 \rightarrow \times16=0.8 < \pi/2$$
$$\pi \rightarrow \pi/2 \rightarrow 1/2 \rightarrow \times16=8 > \pi$$
\vspace{1cm}

for $110000$:
$$\pi/2 \rightarrow \pi \rightarrow 2\pi \rightarrow 4.3 \rightarrow x \rightarrow 1 \rightarrow 0.25  < \pi/2$$
$$\pi \rightarrow 2\pi \rightarrow 4\pi \rightarrow 10 \rightarrow 7.6 \rightarrow 4.5 \rightarrow 2.7 < \pi$$
\vspace{1cm}
\item {\color{green} \bf 110000} $\prec$ $W$ $\prec$  {\color{gray} \bf 11XXXX}  (can be verified for XXXX: from 0001 to 1111 by the computation in the previous Step B i.)
(Following steps can be verified by the computation in the step B i.)
\item $W$ $\prec$ {\color{gray} \bf 010111}    
\item $W$ $\prec$ {\color{gray} \bf 011011}   
\item $W$ $\prec$ {\color{gray} \bf 011101}   
\item $W$ $\prec$ {\color{gray} \bf 011110}  
\item $W$ $\prec$ {\color{gray} \bf 011111}  
\item $W$ $\prec$ {\color{gray} \bf 100111}  
\item $W$ $\prec$ {\color{gray} \bf 101011}  
\item $W$ $\preceq$ {\color{gray} \bf 101100}  
\item $W$ $\prec$ {\color{gray} \bf 101101}  
\item $W$ $\prec$ {\color{gray} \bf 101110}   
\item $W$ $\prec$ {\color{gray} \bf 101111}  
\end{enumerate}

Here, Step B x is a sample for synthetic channels that is equivalent to the natural channel performance for $\pi/2$.

Finally,\newline 
$\Omega$(Strange Attractor; 1st, 2nd, 3th PO, computation step b i) $\rightarrow$  identifies all channels.

\section{Results}\label{SecResult}
In this section, we first investigate partial order 1001 and 0110 for some specific channels. Then, we provide performance results of Gaussian approximations that are compared to Monte Carlo simulations.
\pagebreak 
\subsection{Investigation for BEC}
We first consider BEC for partial order 1001 and 0110. As can be shown in the following figure Fig.~\ref{fige1}, analytic and simulation results show that 1001 is better than 0110 for BEC.
\begin{figure}[hp]
\centering
\vspace{-0.5cm}
\includegraphics[width=0.48\textwidth]{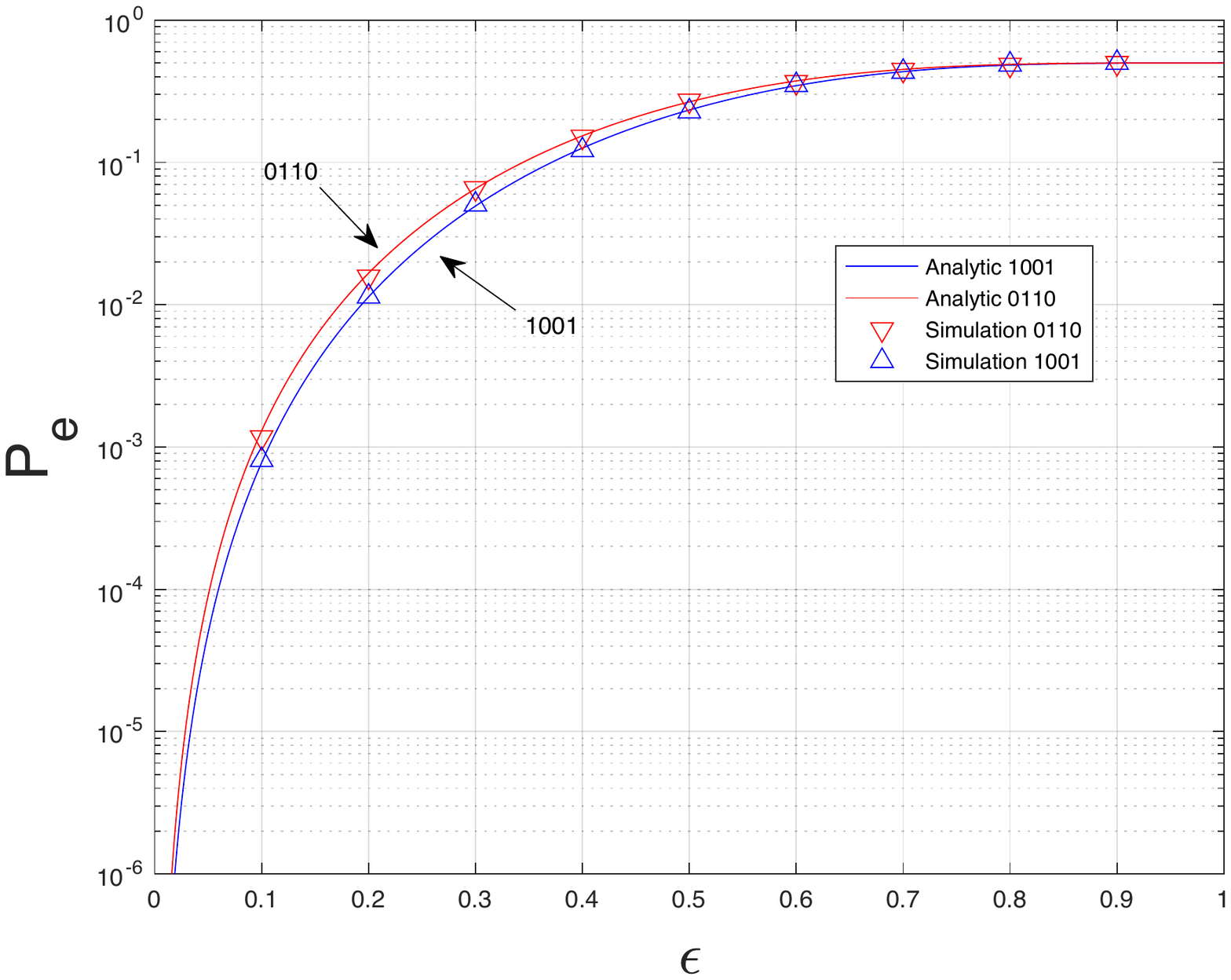}
\vspace{-0.1cm}
\caption{A numerical result of 1001 and 0110 for BEC. In this example 1001 is better than 0110.} 
\label{fige1}
\end{figure} 

\subsection{Investigation for AWGN Channel}

\begin{figure}[hp]
\centering
\includegraphics[width=0.5\textwidth]{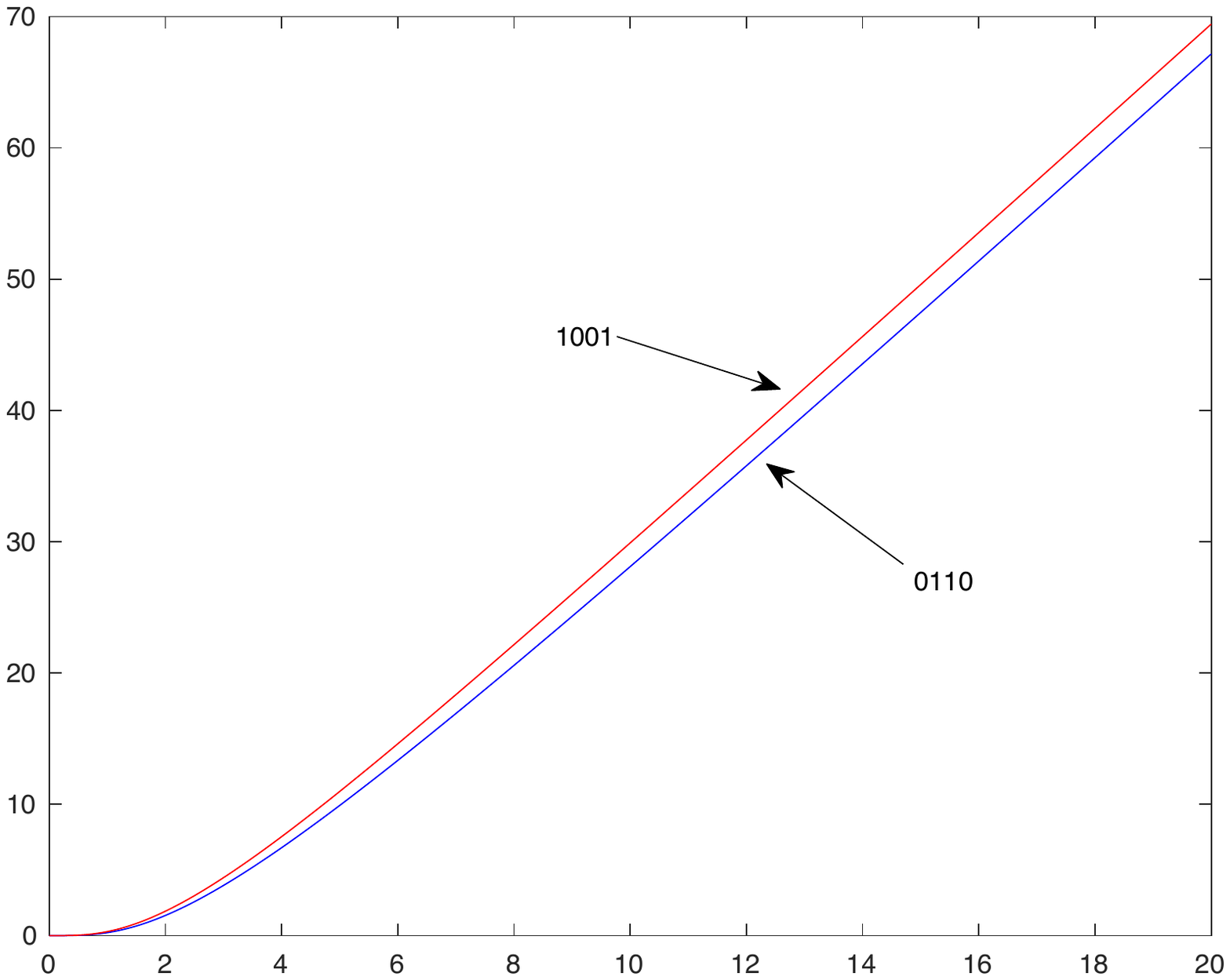}
\caption{By using a graphical result partial order 1001 is better than 0110 universally.} 
\label{fige3}
\end{figure} 
\pagebreak

Partial order 1001 and 0110 is investigated for AWGN channel by using Gaussian approximation method. As an example, following result show that 1001 is better than 0110 for a given fixed channel reliability. For this purpose we provide two figures as follows Fig.~\ref{fige2}. 
Moreover, we provide Fig.~\ref{fige3} the result for any given natural AWGN channel reliability that partial order 1001 is better than the 0110 for all channel conditions. 
\begin{figure}[b!]
\centering
\includegraphics[width=0.50\textwidth]{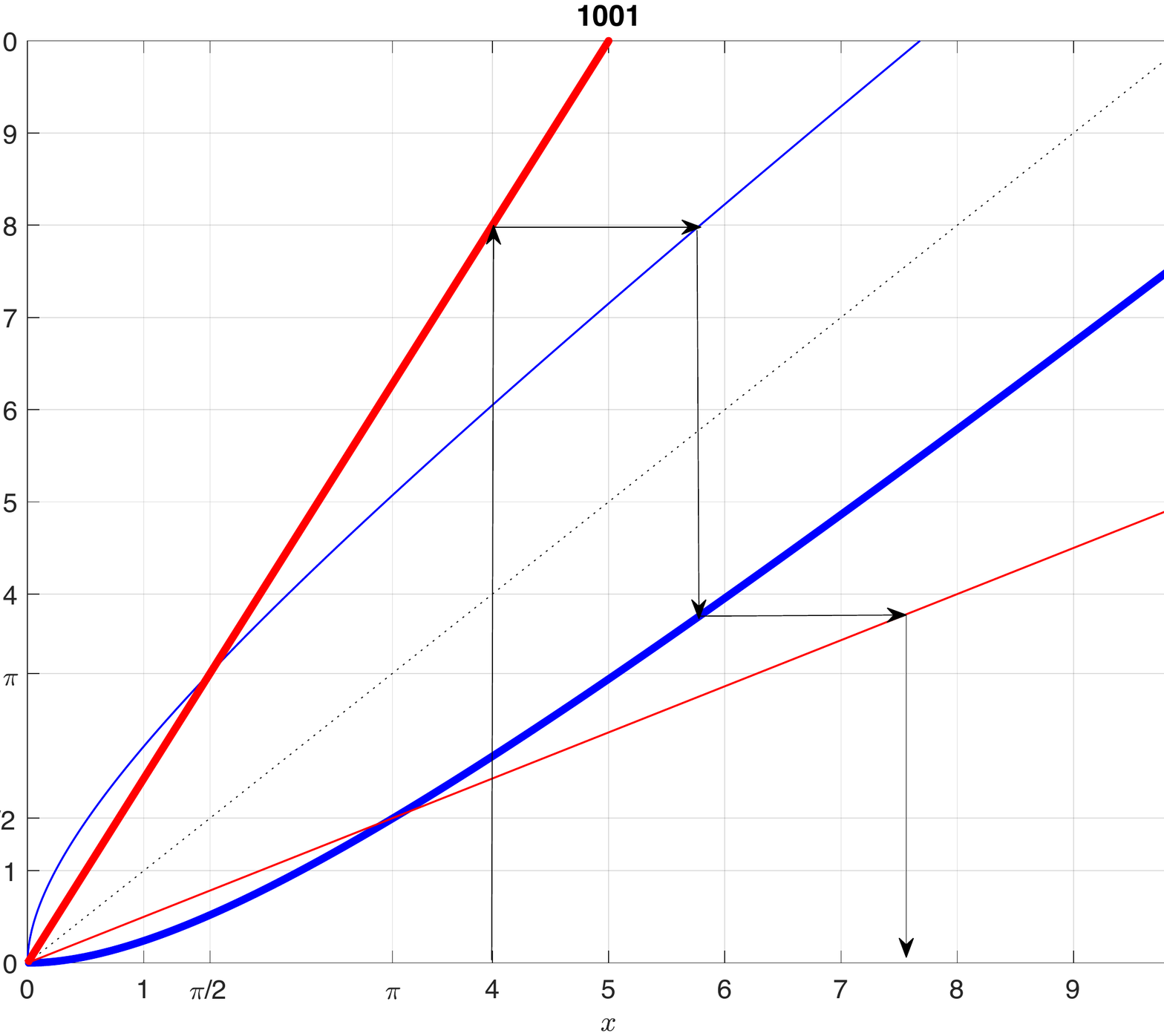}
\vspace{0.5cm}
\includegraphics[width=0.50\textwidth]{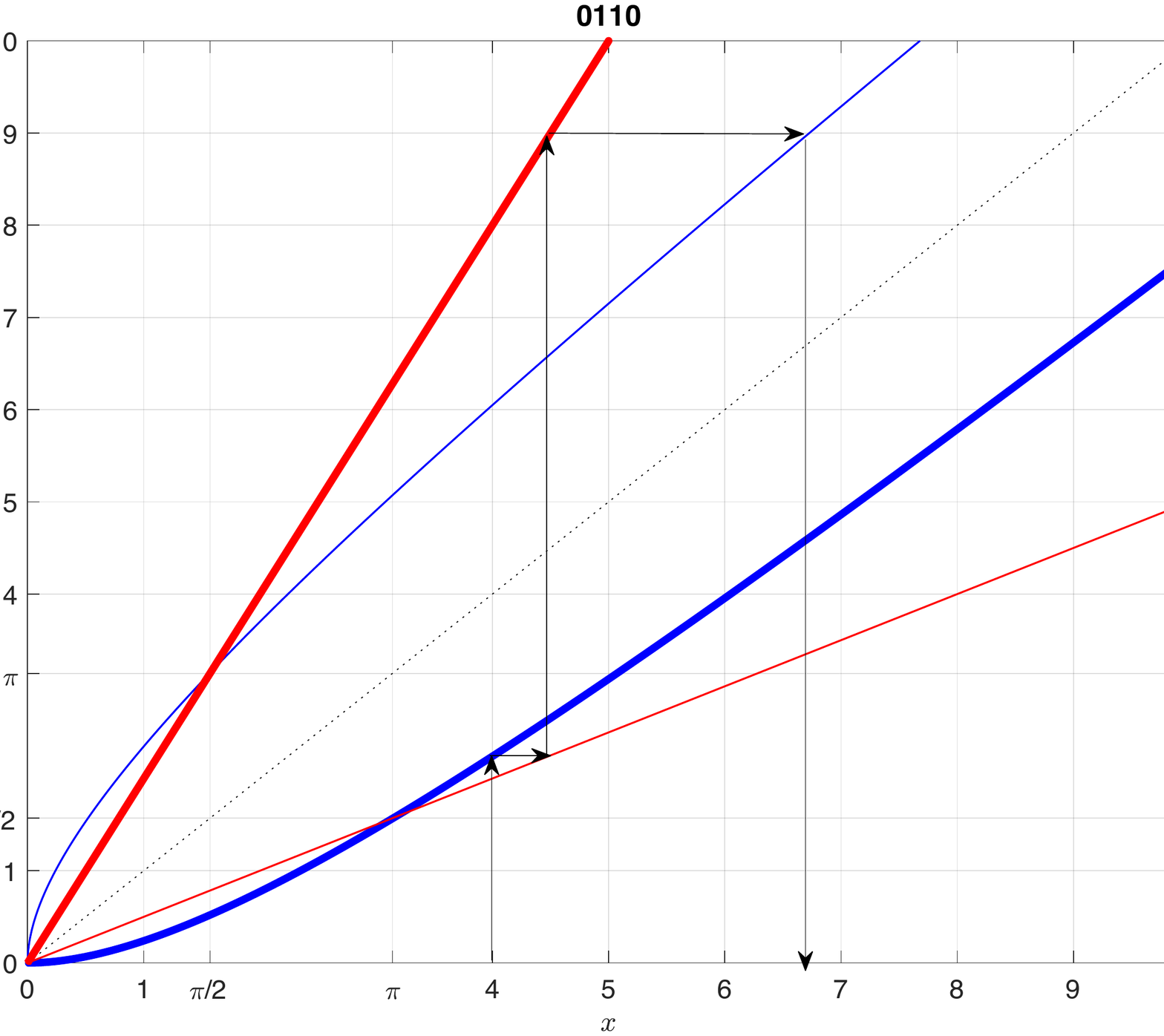}
\caption{A graphical representation of 1001 and 0110 for the same input llr. In this example 1001 (upper figure) is better than 0110 (lower figure).} 
\label{fige2}
\end{figure}

\pagebreak

\subsection{Investigation for BSC}
In this sub section, we investigate the error performance of the 1001 and 0110 channel ordering under BSC.

\begin{figure}[hp]
\centering
\includegraphics[width=0.5\textwidth]{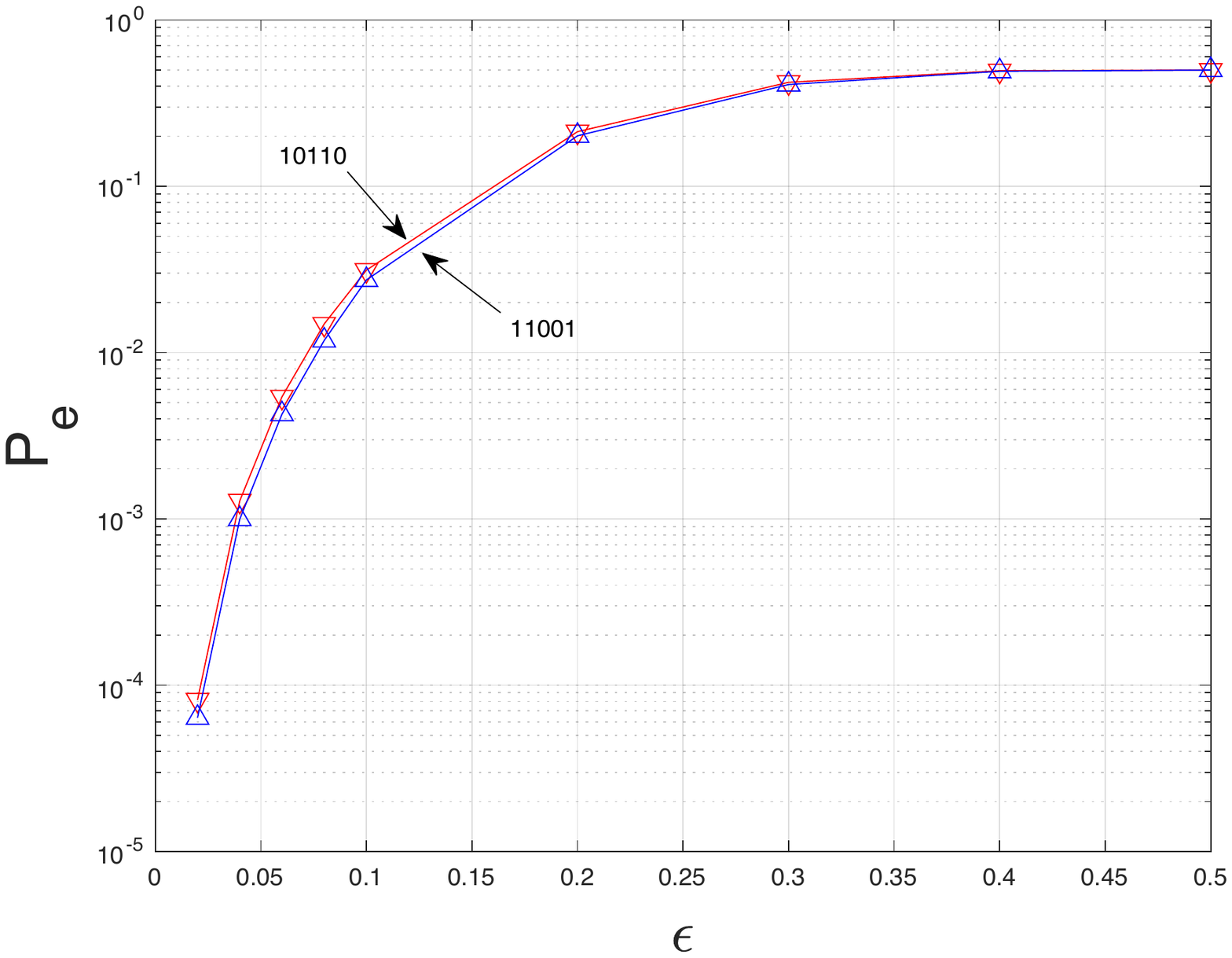}
\caption{A numerical result of $1{\color{black} \bf 1001}$ and $1{\color{black} \bf 0110}$ for BSC. In this example $1{\color{black} \bf 1001}$ is better than $1{\color{black} \bf 0110}$.} 
\label{fige4}
\end{figure} 

As a result of the investigations for BEC and BSC, it can be accepted that 1001 and 0110 is an universal property that result was also verified for AWGN channel.

\subsection{Performance comparisons of Gaussian Approximations}
Finally, Gaussian approximation methods were investigated in \cite{tufail} for high rate polar codes that are designed for optical communications. The simplified approximation proposed in this work are compared in the following figure. Simplified approximation is close to the Chung's method. They provide an upper bound for the Monte Carlo simulation. It is getting closer to the results of GA for high SNR region.    

\begin{figure}[hp]
\centering
\includegraphics[width=0.5\textwidth]{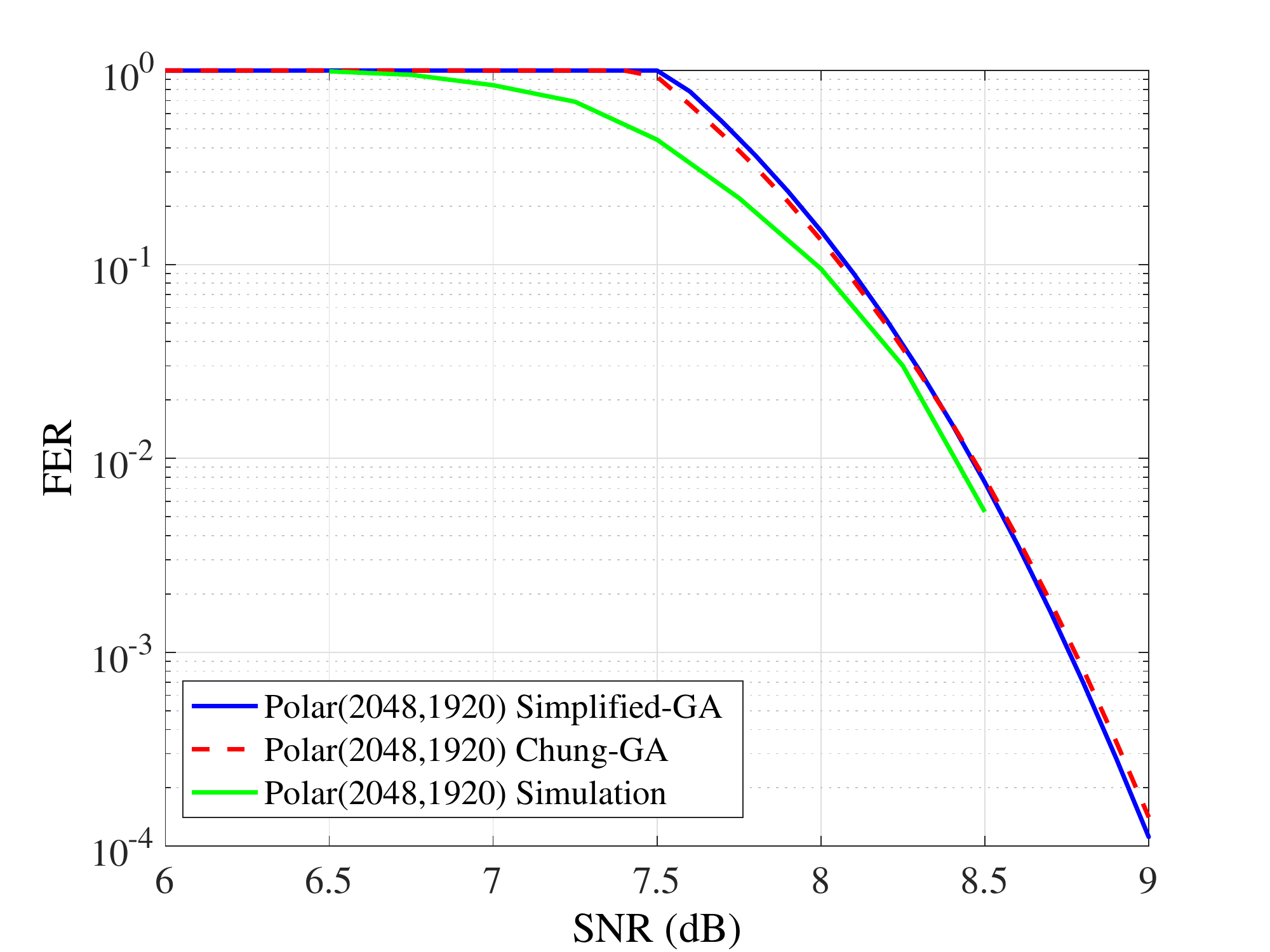}%
\caption{Simulation results and upper bounds by the Gaussian approximation (Chung's) method and the simplified-Gaussian approximation method.}
\label{fig_simpGA}
\end{figure}

\section{Conclusion}\label{sec7}
We focus on polar code constructions in this study. 
Universal (channel independent) properties of polar codes provide significant advantage for efficient constructions.
First, we introduced new partial ordering for polar codes as a natural result of left swap operator.
Then, the Gaussian approximation method was simplified by the help of simple recursive update rules that can be implemented by only a look up table.
As one of the main contributions, strange attractor was introduced not only for efficient code construction but also defining achievable rates in finite block lengths.
We show that cardinality of the strange attractor is related with Fibonacci numbers.   
Finally, we considered new partial orders and strange attractor for efficient code constructions and defining achievable rates at the finite block lengths.

%

%
\section*{Acknowledgment}
This work was performed in postdoc study at Bilkent University in Nov. 2015 - Nov. 2017 and the short visit at The Hong Kong Polytechnic University in Dec. 2018 - Jan. 2019. During the postdoc study author was supported by the Scientific and Technological Research Council of Turkey (T\"{U}B\.ITAK), grant: 1929B011500065. 
This work was partly presented at ISTC 2018 in Hong Kong with the same title.
Author would like to thank Prof. Erdal Ar{\i}kan (Bilkent University), Prof. Ruediger Urbanke (EPFL) and Prof. Francis C.M. Lau (The Hong Kong Polytechnic University) for helpful communications. 

\ifCLASSOPTIONcaptionsoff
  \newpage
\fi

\bibliography{attractor_Ref}{}
\bibliographystyle{IEEEtran}

\end{document}